\documentclass{article}

\PassOptionsToPackage{numbers, compress}{natbib}
\usepackage[preprint]{neurips_2026}

\usepackage[utf8]{inputenc} 
\usepackage[T1]{fontenc}    
\usepackage{hyperref}       
\usepackage{url}            
\usepackage{booktabs}       
\usepackage{amsfonts}       
\usepackage{nicefrac}       
\usepackage{microtype}      
\usepackage{xcolor}         
\usepackage{amsmath}
\usepackage{graphicx}
\usepackage{wrapfig}
\usepackage{amsthm}
\newtheorem{lemma}{Lemma}[section]
\newtheorem{definition}{Definition}[section]
\newtheorem{theorem}{Theorem}[section]

\usepackage{silence}
\usepackage{listings}
\newcommand\code[1]{\lstinline$#1$}

\usepackage{algorithm}
\usepackage{algpseudocode}
\usepackage{multirow}
\usepackage{pifont}
\usepackage[normalem]{ulem}

\definecolor{commentcolor}{HTML}{a299a2}
\definecolor{string-color}{rgb}{0.3333, 0.5254, 0.345}
\definecolor{indentcolor}{rgb}{0.75, 0.75, 0.75}
 
\lstdefinestyle{pseudocode}{
    language=Python,
    mathescape=true,
    keywords={PACEStep,if,return,else,while,for,in,append,ceil,break,max,min,set,or,is,None},
    keywordstyle=\color{blue}\bfseries,
    morekeywords={[2], requests, T, M, utility, rho},
    keywordstyle={[2]\color{string-color}\bfseries},
    morekeywords={[3], def},
    keywordstyle={[3]\color{red}\bfseries},
    basicstyle=\small\ttfamily,
    identifierstyle=\color{black},
    sensitive=false,
    commentstyle=\color{commentcolor},
    columns=fullflexible,
    keepspaces=true,               
    breaklines=true,
    showstringspaces=false,
    numbers=left,
    numberstyle=\tiny,
    numbersep=4pt,
    xleftmargin=5pt,
    lineskip={-1pt},
    postbreak=\mbox{\textcolor{black}{$\hookrightarrow$}\space},
    stringstyle=\color{string-color},
    literate={|}{{{\color{indentcolor}\vrule width 0.4pt\hspace{4pt}}}}{1},
}

\makeatletter
\newenvironment{codealgorithm}[1][htb]{%
  \renewcommand{\ALG@name}{Algorithm}%
  \begin{algorithm}[#1]%
  }{\end{algorithm}}
\makeatother

\algnewcommand{\LineComment}[1]{\State \(\Comment\) #1}

\title{Regulating Branch Parallelism in LLM Serving}

\author{%
  Swapnil Gandhi\textsuperscript{1}
  \quad
  Siva Hari\textsuperscript{2}
  \quad
  William J. Dally\textsuperscript{1,2}
  \quad
  Christos Kozyrakis\textsuperscript{1,2}\\[1ex]
  \textsuperscript{1}Stanford University \quad
  \textsuperscript{2}NVIDIA\\[1ex]
}

\begin{document}
\maketitle

\begin{abstract}
Recent methods expose intra-request parallelism in LLM outputs, allowing independent branches to decode concurrently. Existing serving systems execute these branches eagerly or under fixed caps. We show that both are brittle: eager admission inflates the shared decode step, degrading co-batched requests in serial stages, while conservative fixed caps forgo the throughput that motivated exposing branches in the first place. We call the excess step latency caused by admitted branches the \emph{branch externality} and show that the safe width depends on batch composition, context lengths, and accumulated slack, all of which change continuously over a workload trace. We introduce \textsc{TAPER}, a per-step admission controller that treats extra branches as opportunistic work, admitted only when the predicted branch externality fits within the batch's current slack budget. Per-step regulation is practical because branch-level scheduling decouples compute from memory: branches share the request's prefix KV, so expanding or contracting width requires no memory reclamation. On Qwen3-32B, \textsc{TAPER} improves goodput by $1.77\times$ over \textsc{IRP-Off} and by $1.48\times$ over \textsc{IRP-Eager}, while maintaining over 95\% SLO attainment.
\end{abstract}
\section{Introduction}\label{sec:introduction}

As LLM responses grow longer, driven by chain-of-thought reasoning~\cite{chain-of-thought}, structured generation~\cite{deepseek-r1}, and multi-step tasks~\cite{react}, the sequential nature of autoregressive decoding becomes an increasingly visible bottleneck. A growing body of work addresses this by exposing \emph{intra-request parallelism} (IRP): decomposing a single response into independent branches that decode concurrently before a reduce phase combines them~\cite{skeleton-of-thought, pasta, apar, apr, aspd, multiverse, threadweaver}. For an isolated request the appeal is immediate: parallel execution shrinks the critical path from the sum of branch lengths toward the length of the slowest branch, converting otherwise idle decode capacity into useful progress~\cite{parallel-prompt, paralleltextgeneration-survey}.

The difficulty begins when this private speedup is realized inside a shared serving system. Under continuous batching~\cite{orca}, the engine executes one forward pass per decode step across all active sequences, advancing each sequence by one token. Step latency grows with the number of sequences and their aggregate context length~\cite{servegen, nanoflow}. A request in a parallel stage contributes multiple branch sequences to the same step, widening the batch for everyone. The parallel request is compensated: it receives progress on several branches. A co-batched request in a serial stage receives only its single next token, yet waits for the same inflated step. We call this excess latency the \emph{branch externality}. In \S\ref{sec:characterization}, we show that under moderate and high load, eager branch admission can raise aggregate token throughput while simultaneously degrading goodput and SLO attainment.

Simple remedies do not resolve this tension. Disabling IRP eliminates the interference but forfeits the throughput and latency gains that motivated exposing branches in the first place. Fixed caps, admitting at most $k$ branches per step, help in one operating regime and hurt in another, because the safe width depends on batch composition, context lengths, and accumulated slack, all of which change continuously over a workload trace. What is needed is not another static policy, but a per-step admission controller that widens when the current batch has headroom and contracts when requests in serial stages are at risk.

Historically, such fine-grained regulation has been impractical because the natural scheduling unit is the request~\cite{vtc, llumnix}. Revising a request-level decision entangles compute and memory: pausing a request forces the scheduler to either keep its KV state resident, consuming scarce memory, or evict it, incurring restoration cost when the request resumes~\cite{vllm, sglang}. Frequent per-step revisions are therefore expensive to undo.

\textbf{Our central observation is that branch-level scheduling separates these concerns.} During a parallel phase, all branches share the request's prefix KV~\cite{multiverse}. A branch's own KV consists only of the tokens it has generated so far, typically a small fraction of the shared prefix~\cite{apar}. When the scheduler defers a branch from a given step, it does not reclaim the prefix (which remains resident for the sibling branches that are still admitted) and it does not evict the branch-local KV (which is small enough to leave in place). Admitting the branch in a subsequent step requires no state restoration: the scheduler simply adds it back to the next forward pass. The residual cost is that a deferred branch's branch-local KV remains resident while idle, but this is bounded by the tokens the branch has generated, and the memory is accounted to the parent request. If KV cache pressure requires eviction, the serving system preempts the entire request through its normal eviction policy. Branch regulation therefore adjusts compute allocation without triggering memory reclamation or restoration. It is cheap, voluntary, and reversible. This structural property is what makes per-step width regulation practical.

We introduce \textsc{TAPER}, a runtime controller that exploits this asymmetry. \textsc{TAPER} first protects \emph{single-token progress}: every active request contributes exactly one token to the next step. It then treats additional branches as \emph{opportunistic width}, admitting them only when the predicted branch externality fits within the batch's current slack budget. When slack is abundant, \textsc{TAPER} expands toward eager execution; as requests in serial stages approach their latency targets, it contracts toward single-token progress. Because contraction amounts to declining branch admissions rather than evicting requests, \textsc{TAPER} adapts every step without policy switches, preemption machinery, or fallback logic.

We make the following contributions:

\begin{itemize}
    \item We characterize the \emph{branch externality}: requests in parallel stages benefit from widened decode steps while co-batched requests in serial stages pay the inflated latency without receiving additional progress, making eager and fixed-width policies brittle across operating points~(\S\ref{sec:characterization}).
    \item We show that branch-level scheduling \emph{decouples compute regulation from memory regulation}: branches share the prefix KV and have small branch-local state, so step width can be expanded or contracted per step without eviction or restoration. This is the structural property that makes per-step control feasible~(\S\ref{sec:TAPER:feedback}).
    \item We introduce \textsc{TAPER}, a per-step controller that predicts the branch externality, bounds it against the batch's slack budget, and allocates opportunistic width via a greedy planner with a pluggable utility interface~(\S\ref{sec:TAPER}).
    \item We evaluate \textsc{TAPER} on Qwen3-32B under mixed workloads, showing that it improves goodput over \textsc{IRP-Off} by $1.77\times$ and over \textsc{IRP-Eager} by $1.48\times$, while maintaining over 95\% SLO attainment where eager drops below 50\%~(\S\ref{sec:evaluation}).
\end{itemize}

\textsc{TAPER} reframes intra-request parallel text generation as an admission control problem. Existing techniques show how to expose parallelism within a request~\cite{paralleltextgeneration-survey}; \textsc{TAPER} decides how much of that parallelism to realize in each shared decode step. By exploiting the cheap reversibility of branch deferral, it harvests intra-request parallelism when the batch has slack and falls back to protected single-token progress when it does not.
\section{Characterizing Intra-Request Parallelism}
\label{sec:characterization}

\begin{figure}[t]
    \centering
    \includegraphics[width=\linewidth]{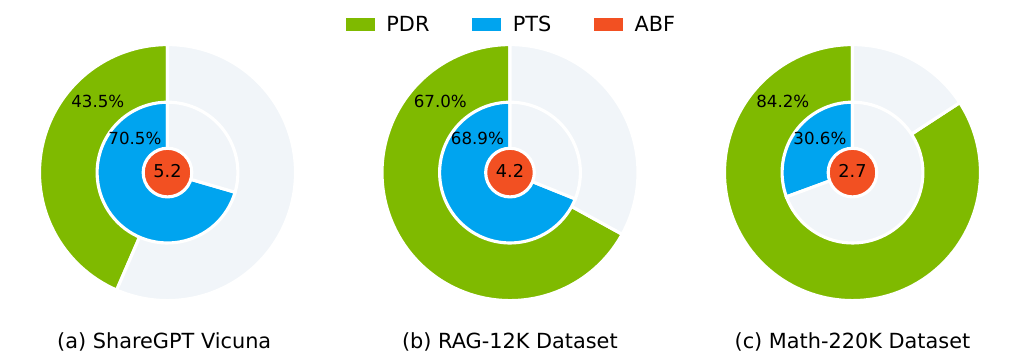}
    \caption{\textbf{Intra-request parallelism across workloads.} Proportion of decomposable requests (PDR), parallel token share (PTS), and average branch fanout (ABF) for three datasets.}
    \label{fig:parallelism-in-the-wild}
\end{figure}

Intra-Request parallelism can substantially accelerate individual requests. This section shows that it also creates a cost that falls on other requests: under realistic mixed workloads, the acceleration of requests in parallel stages comes at the expense of co-batched requests in serial stages, and no fixed policy navigates the trade-off across operating points.

\subsection{Intra-Request parallelism in the wild}
\label{sec:char:blp}

Several recent methods shorten the critical path of LLM decoding by exposing independent branches within a single response. Skeleton-of-Thought~\citep{skeleton-of-thought} expands outline points concurrently; APAR~\citep{apar} emits explicit branch tokens; PASTA~\citep{pasta} introduces asynchronous promises; ASPD~\citep{aspd} trains branch-invisible attention masks for native parallel decoding; and Multiverse~\citep{multiverse} exposes Map/Process/Reduce control flow. These methods differ in how branches are discovered and represented, but they produce the same serving-visible structure~\citep{aspd}.

Output generation comprises a sequence of interleaved stages, each decoding in serial or parallel mode. In serial stages, generation proceeds through a single continuation in the usual autoregressive manner. During parallel stages, the model simultaneously decodes multiple independent branches, enabling concurrent token generation, while each branch internally still generates tokens autoregressively. A reduce phase combines the completed branches before generation continues, possibly entering another parallel stage later in the response.

We call a request \emph{decomposable} if it enters at least one parallel stage during its lifetime, and \emph{non-decomposable} otherwise. At any given decode step, a decomposable request may be in either a serial or a parallel stage. A request in a serial stage, whether decomposable or not, contributes a single continuation token and is indistinguishable from a non-decomposable request. The number of branches exposed during a parallel stage is the request's \emph{branch fanout} $n_r$.

\textbf{Prevalence.} \emph{How common are decomposable requests?} We characterize three representative datasets (Figure~\ref{fig:parallelism-in-the-wild}). The \emph{proportion of decomposable requests} (PDR) measures how often parallelism appears: Math-220K has the highest PDR at 84.2\%, followed by the RAG-12K dataset at 67.0\% and ShareGPT Vicuna at 43.5\%. The \emph{parallel token share} (PTS) measures how much of a decomposable response is parallel: ShareGPT~\cite{vicuna} (70.5\%) and RAG-12K~\cite{rag-12k} (68.9\%) are dominated by parallel tokens when they decompose, while Math-220K~\cite{math-220k} (30.6\%) produces short, narrow parallel stages. The \emph{average branch fanout} (ABF) determines the maximum step width the runtime could admit: ShareGPT averages 5.2 branches per parallel stage, RAG-12K 4.2, and Math-220K 2.7.

The pattern is not uniform. Math reasoning decomposes frequently but narrowly: many parallel stages, each with few short branches covering a small share of the output. Structured tasks like ShareGPT and RAG-12K decompose less often but more deeply, with wider branching and a larger share of tokens in parallel stages. What determines the severity of the branch externality (\S\ref{sec:char:externality}) is not PDR alone but the combination of ABF and PTS: a high-PDR, low-ABF workload creates many small externalities; a moderate-PDR, high-ABF workload creates fewer but larger ones.

\textbf{Implications for serving.} Decomposable requests are common enough to matter at scale across all three workload types. Yet even in workloads with high PDR, requests spend a significant fraction of their lifetime in serial stages (PTS is always below 71\%). The typical batch will therefore contain a mix of requests in parallel and serial stages. It is this mixed-batch setting where the cost of branch parallelism becomes visible.

\begin{figure*}[t]
    \centering
    \includegraphics[width=\linewidth]{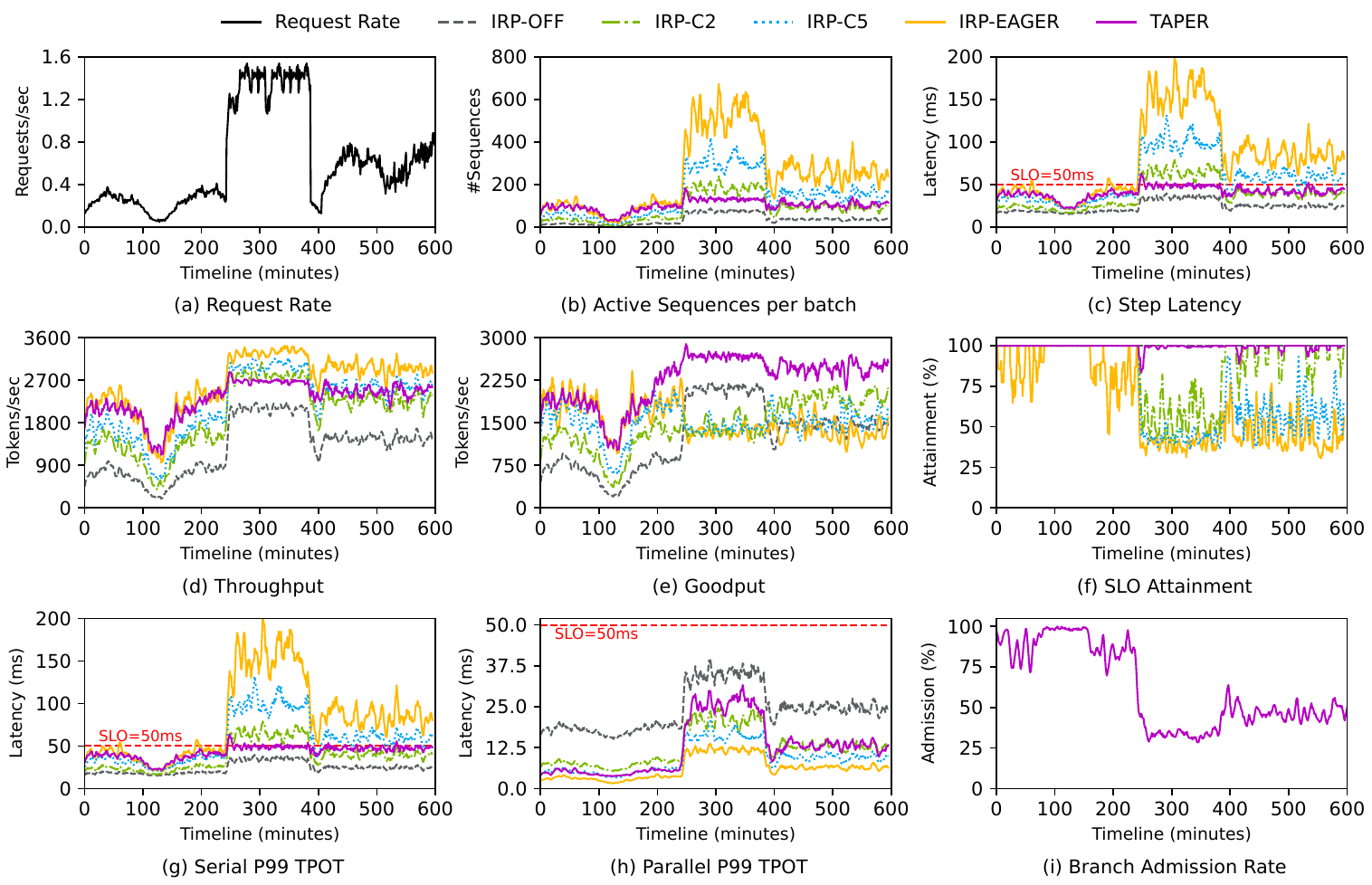}
    \caption{\textbf{The throughput trap and its resolution.} Four fixed step-width policies and \textsc{TAPER} on a mixed workload. \textsc{IRP-Eager} raises throughput but collapses goodput and SLO attainment under load. The cost falls asymmetrically on requests in serial stages. \textsc{TAPER} dynamically adjusts its branch admission rate (panel~(i)), retaining most of eager's throughput while protecting SLO attainment.}
    \label{fig:main-results}
\end{figure*}

\subsection{The throughput trap}
\label{sec:char:trap}

The metrics above characterize parallelism in isolation. In a serving system, branches share a decode step with other requests, and the cost of admitting them is borne collectively.

As described in \S\ref{sec:introduction}, continuous batching~\citep{orca} executes one forward pass per decode step, advancing each active sequence by one token. When the runtime increases a request's \emph{step width} $w_{r,t}$, the number of its branches included in step $t$, the step takes longer and every co-batched request pays that latency. The request in a parallel stage is compensated: it receives $w_{r,t}$ tokens of branch progress. A co-batched request in a serial stage receives only its single continuation token.

Figure~\ref{fig:main-results} makes this concrete on an Azure-derived trace~\citep{azure-llm-trace} (details in Appendix~\ref{app:experimental-setup}). A mixed workload runs under four step-width policies: \textsc{IRP-Off} ($w_{r,t}=1$), two intermediate caps (\textsc{IRP-C2}, \textsc{IRP-C5}), and \textsc{IRP-Eager} ($w_{r,t}=n_r$). At low load, larger step widths extract more raw throughput: the batch has headroom and wider steps do not slow anyone down. As load grows, wider policies inflate step latency~(\hyperref[fig:main-results]{\ref*{fig:main-results}(c)}). Raw throughput continues to favor \textsc{IRP-Eager}~(\hyperref[fig:main-results]{\ref*{fig:main-results}(d)}), but goodput and SLO attainment collapse~(\hyperref[fig:main-results]{\ref*{fig:main-results}(e,\,f)}): the system produces tokens while delivering fewer of them within their latency budget. We call this the \emph{throughput trap}: the metric operators inspect first (raw throughput) favors the policy that fails the metric users experience (goodput).

The per-class view reveals where the SLO failures land. Serial P99 TPOT~(\hyperref[fig:main-results]{\ref*{fig:main-results}(g)}) spikes sharply under \textsc{IRP-Eager} during stress events, crossing the 50\,ms target. Under \textsc{IRP-Off}, it stays flat. Parallel P99 TPOT~(\hyperref[fig:main-results]{\ref*{fig:main-results}(h)}) shows the opposite: it remains low regardless of load. The throughput trap is not a uniform degradation; it is an asymmetric transfer of latency from requests in serial stages to those in parallel stages.

The intermediate caps do not escape. \textsc{IRP-C2} and \textsc{IRP-C5} are safer than \textsc{IRP-Eager} under load but still inflate step latency during stress events, and their goodput still collapses, just later. They are also worse at low load, leaving throughput unused. The safe step width depends on conditions that change continuously over the trace: batch composition, context lengths, accumulated slack, and KV-cache pressure. No fixed cap is right across operating points.

\subsection{Branch externality}
\label{sec:char:externality}

The throughput trap and its asymmetric incidence reduce to a single quantity. When admitted branches inflate a decode step, every co-batched request pays the excess latency, but only the request in a parallel stage receives additional progress. This is a classic externality in the economic sense: one agent's resource consumption imposes uncompensated costs on others~\cite{pigou2017economics}. We formalize this as the \emph{branch externality}. Let $S_0$ be the baseline step in which every active request advances by one token. Let $S(\mathbf{k})$ be a widened step that admits additional branches according to allocation $\mathbf{k}$, where $k_r$ is the number of opportunistic branches granted to request $r$. The branch externality is
\[
    E_t(\mathbf{k}) \;=\; T\!\bigl(S(\mathbf{k})\bigr) \;-\; T(S_0),
\]
where $T(\cdot)$ is the predicted step latency.

Every request in the batch pays $E_t(\mathbf{k})$; only requests in parallel stages with $k_r > 0$ receive additional progress in exchange. The step latency panel~(\hyperref[fig:main-results]{\ref*{fig:main-results}(c)}) makes the externality visible: the gap between each policy's line and \textsc{IRP-Off}'s baseline is that policy's $E_t$. \textsc{IRP-Eager}'s gap grows unchecked during stress events, reaching over 150\,ms above baseline; the fixed caps produce smaller but still unregulated gaps. The externality is monotone non-decreasing in each $k_r$: admitting more branches cannot shorten a decode step.

What the runtime needs is not a better static cap but a per-step controller that predicts $E_t(\cdot)$ against the current batch state and admits branches only when the predicted externality fits within the batch's available slack. The next section introduces \textsc{TAPER}, such a controller. Its behavior is already visible in Figure~\ref{fig:main-results}: \textsc{TAPER}'s step latency tracks near \textsc{IRP-Eager} at low load and contracts toward \textsc{IRP-Off} during stress events~(\hyperref[fig:main-results]{\ref*{fig:main-results}(c)}), while its branch admission rate~(\hyperref[fig:main-results]{\ref*{fig:main-results}(i)}) shows the control signal adapting per step.
\section{\textsc{TAPER}: Per-Step Admission Control}
\label{sec:TAPER}

The characterization in \S\ref{sec:characterization} ended with a specific ask: a controller that predicts the branch externality $E_t(\mathbf{k})$ against the current batch state and admits branches only when the predicted cost fits within available slack. This section delivers that controller. We first formalize the object it acts on, the parallel phase, and establish a schedule-invariance property that licenses per-step width decisions without changing model semantics~(\S\ref{sec:TAPER:phases}). We then build the controller in three pieces: a cost model that predicts what a widened step will cost~(\S\ref{sec:TAPER:cost}), a slack-budgeted admission rule that decides how much widening the batch can absorb~(\S\ref{sec:TAPER:slack}), and a greedy planner that combines them~(\S\ref{sec:TAPER:planner}). We close by arguing that per-step regulation is not just desirable but structurally cheap, because branch-level scheduling decouples compute regulation from memory regulation~(\S\ref{sec:TAPER:feedback}).

\subsection{Parallel phases and schedule invariance}
\label{sec:TAPER:phases}

A \emph{parallel phase} is a request-local segment in which the frontend has exposed a finite set of independent branches that must all complete before a reduce phase combines them. We represent a phase as $\phi = (P_\phi, H_\phi, \{h_i\}_{i=1}^{n_\phi})$, where $P_\phi$ is the shared prefix generated before the phase, $H_\phi$ is the phase-level header (e.g., an outline or task decomposition), $h_i$ is the header for branch $i$, and $n_\phi$ is the branch fanout. This abstraction captures Multiverse-style Map/Reduce~\citep{multiverse}, Skeleton-of-Thought outline expansion~\citep{skeleton-of-thought}, ASPD-style internal parallel decoding~\citep{aspd}, and program-derived decompositions~\citep{pasta, sprint, apar, apr}.

The semantic contract is a \emph{visibility rule} governing what each token can attend to. When generating token $y_{i,t}$ within branch $i$, the model attends to
\[
    P_\phi \;\oplus\; H_\phi \;\oplus\; h_i \;\oplus\; y_{i,<t}\,,
\]
the shared prefix, the branch header, and the branch's own prior tokens, but nothing from sibling branches. During the reduce phase, token $z_t$ attends to
\[
    P_\phi \;\oplus\; H_\phi \;\oplus\; \bigoplus\nolimits_{i}(h_i \oplus y_i) \;\oplus\; z_{<t}\,,
\]
the prefix followed by all completed branches in canonical order. Because every branch conditions on the same $P_\phi \oplus H_\phi$, a backend with paged or radix-tree KV caches~\citep{vllm, sglang} can serve all branches from a single set of prefix blocks. Each branch materializes only its own branch-local tokens. The marginal KV cost of admitting one additional branch is therefore branch-local, not a full prefix.

The runtime's scheduling freedom lies in how it assigns branch tokens to decode steps: which branches participate in each step, how many advance concurrently, and in what order. A schedule is \emph{valid} if it preserves token order within each branch, enforces the visibility rule, and defers reduction until all branches complete.

\begin{lemma}[Schedule invariance]
The output of a parallel phase is independent of the order and timing in which its branches are scheduled. Formally, any two valid schedules induce the same conditional distribution over all generated tokens. Under deterministic decoding the outputs are identical; under stochastic decoding they are identical given fixed per-branch random seeds.
\end{lemma}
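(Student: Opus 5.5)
The argument is a factorization of the joint distribution together with an induction on decode steps. First I would fix the model: each token's conditional distribution is a function only of its visible context, as given by the visibility rule. For branch token $y_{i,t}$ the context is $P_\phi \oplus H_\phi \oplus h_i \oplus y_{i,<t}$. For reduce token $z_t$ it is $P_\phi \oplus H_\phi \oplus \bigoplus_i (h_i \oplus y_i) \oplus z_{<t}$. I would write $p_\theta(\cdot \mid \text{context})$ for the model's next-token distribution. I would also state explicitly the assumption this relies on: the forward pass computes the same logits for a sequence no matter which other sequences share the batch and at what position. This is batch-invariance of the kernels, which the paper's semantics implicitly assume; the attention mask guarantees that co-batched sequences contribute nothing to a given row.

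Next, I would show that under any valid schedule the joint law of all generated tokens has the same product form:
\[
\Pr(y_1,\dots,y_{n_\phi}, z) \;=\; \prod_{i=1}^{n_\phi}\prod_{t} p_\theta\bigl(y_{i,t}\mid P_\phi\oplus H_\phi\oplus h_i\oplus y_{i,<t}\bigr)\;\prod_t p_\theta\bigl(z_t \mid P_\phi\oplus H_\phi\oplus \textstyle\bigoplus_i(h_i\oplus y_i)\oplus z_{<t}\bigr).
\]
I would prove this by induction over decode steps of the schedule. At each step, every token sampled for branch $i$ is its next token in order, because the schedule preserves intra-branch order. Its context is exactly the branch's own prefix, because the schedule enforces visibility. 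So its conditional law given the entire past, including sibling tokens and other requests, equals the factor above. Chaining these conditionals gives the product.

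The reduce factors only begin after every branch is complete, since a valid schedule defers reduction. The schedule changes only the order in which the factors are multiplied, and the product is unchanged. Hence any two valid schedules induce the same conditional distribution.

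For the pointwise claims, I would treat the randomness as a function. Under deterministic decoding, each token is a fixed function of its context, such as $\arg\max$. Under stochastic decoding with per-branch seeds, the $t$-th token of branch $i$ is a fixed function of its context and the $t$-th draw of branch $i$'s RNG stream, because order within the branch is preserved. In both cases the same induction shows that every token is a deterministic function of schedule-independent inputs, so the outputs are identical.

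The main obstacle is not the combinatorics, which is routine. It is making precise that "context alone determines the logits," because floating-point nondeterminism across different batch compositions could break pointwise equality. I would therefore state batch-invariant numerics as a hypothesis, or alternatively interpret the lemma at the level of the idealized model $p_\theta$, and note that the distributional statement holds exactly there.
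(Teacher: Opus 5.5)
Your proposal is correct and rests on the same key fact as the paper's proof. By the visibility rule, the context of $y_{i,t}$ contains no sibling tokens, and the reduce context lists branches in canonical order, so every per-token conditional is schedule-independent.

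The organization differs. The paper splits into branch tokens and reduce tokens, and inducts on $t$ within each branch to get the law of each $y_i$ separately. It then passes to the joint law of $(y_1,\dots,y_{n_\phi})$ by remarking that the argument holds ``for each branch independently.'' Equal per-branch laws give an equal joint law only once the siblings are known to be conditionally independent given $P_\phi \oplus H_\phi$, and the paper leaves this implicit. Your chain-rule factorization over decode steps conditions each token on the \emph{entire} past, including sibling tokens and other requests. It therefore yields the joint law directly as a schedule-independent product, which supplies that conditional independence. It also covers adaptive schedules such as \textsc{TAPER}'s, whose width decisions depend on realized context lengths and hence on the generated tokens.

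Two small additions would make your argument airtight:
\begin{itemize}
\item State that tokens sampled in the same step use independent randomness, so their joint conditional given the past factors.
\item For the pointwise stochastic claim, note that the reduce phase must also draw from a schedule-independent stream. The paper glosses over this as well.
\end{itemize}

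Your explicit batch-invariance hypothesis is sharper than the paper's ``up to numerical nondeterminism'' caveat. The paper attaches that caveat only to deterministic decoding, but batch-dependent logits would perturb the distributional claim too.

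The paper's route buys brevity and a case split that mirrors the two clauses of the visibility rule. Yours buys a genuinely joint statement that survives token-dependent scheduling.
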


The argument is direct: the attention context of $y_{i,t}$ contains no token from any sibling branch, so whether siblings have been scheduled before, concurrently with, or after $y_{i,t}$ cannot affect its distribution. Once all branches complete, the reduce phase observes the same outputs in canonical order regardless of execution order. A formal proof appears in Appendix~\ref{app:lemma1}.

\textsc{TAPER} is therefore free to defer, interleave, or reorder branches without affecting model output. The \emph{step width} $w_{r,t}$, the number of branches from request $r$ that participate in decode step $t$, is a free parameter of the schedule, not a constraint of the model.

\subsection{Predicting the cost of a widened step}
\label{sec:TAPER:cost}

To decide whether admitting additional branches is safe, \textsc{TAPER} needs to predict what the widened step will cost. Decode-step latency decomposes into an FFN component, approximately linear in the number of new tokens, and an attention component, approximately linear in aggregate context length~\cite{nanoflow}. A candidate step is therefore well summarized by its \emph{step composition} $S$: the number of active sequences and their aggregate context length. \textsc{TAPER} uses a calibrated latency predictor $T(S)$ over these composition, requiring only that $T$ is monotone non-decreasing in the number of admitted branches. Monotonicity is the key structural property: if admitting one more branch from request $r$ exceeds the budget, admitting two more will too, enabling the greedy planner's pruning rule (\S\ref{sec:TAPER:planner}).

\subsection{Slack-budgeted admission}
\label{sec:TAPER:slack}

\textsc{TAPER} first protects \emph{single-token progress}. The \emph{protected composition} $S_0$ advances every active request by exactly one token: a single continuation for requests in serial stages, one branch for requests in parallel stages. Any remaining exposed-but-incomplete branches are \emph{ready} and eligible for opportunistic admission. Admitting ready branches widens the step ($w_{r,t} > 1$); \textsc{TAPER} treats this additional width as \emph{opportunistic}, admitted only when predicted externality fits within slack budget.

To decide which opportunistic branches are safe, \textsc{TAPER} converts per-request slack into a time budget. Each active request $r$ has a deadline $d_r(t)$ for its next token, derived from its SLO. If the baseline step takes predicted time $T_0 = T(S_0)$, the residual time available for opportunistic width is
\[
    B_t \;=\; \max\!\Big(0,\;\min_r\bigl(d_r(t) - t\bigr) - T_0\Big).
\]
\textsc{TAPER} spends a fraction $\rho \in (0,1]$ of this residual. A candidate widened step $S$ is admitted only if
\[
    T(S) \;\leq\; T_0 + \rho\, B_t.
\]
This bounds the branch externality: the predicted cost of the widened step must fit within the fraction of slack the operator is willing to spend.

The minimum over all requests is deliberate: opportunistic width must be safe for the most urgent request in the batch, not the request whose phase is being widened. This prevents a slack-rich request in a parallel stage from underwriting a widened step that pushes a tight request in a serial stage past its deadline, exactly the failure mode the throughput trap produces.

When the slack budget cannot accommodate all branches, \textsc{TAPER} needs a tiebreaker. We separate safety from value: $T(\cdot)$ decides what is \emph{feasible}; a monotone utility curve $u_r(k)$ decides what is \emph{valuable}, where $k$ is the number of opportunistic branches granted to request $r$. Throughput-oriented operators use linear utility; fairness-oriented operators use concave utility so that the first opportunistic branch matters more than later ones; priority operators weight by tenant class. Each is a curve choice, not a scheduler change. \textsc{TAPER} predicts cost; the operator supplies value through the utility interface.

\subsection{Per-step planner}
\label{sec:TAPER:planner}

At each decode step, \textsc{TAPER} starts from the baseline composition $S_0$ and greedily widens. It considers admitting one more branch from each request with ready branches, commits the candidate with the highest marginal utility per marginal latency cost, and repeats until no feasible increment remains. Algorithm~\ref{alg:TAPER} gives the pseudocode.

\begin{codealgorithm}[t]
\caption{TAPER Per-Step Planner}
\label{alg:TAPER}
\begin{lstlisting}[style=pseudocode]
def TAPERStep(requests, T, utility, $\textcolor{string-color}{\rho}$):
    baseline = BuildBaseline(requests)                    #protected composition
    min_slack = min(r.deadline - now() for r in requests) #most urgent request
    budget = T(baseline) + $\textcolor{string-color}{\rho}$ * max(0, min_slack - T(baseline))
    granted = {r: 0 for r in requests}
    candidates = {r for r in requests if r.ready_branches > 0}
    step = baseline
    while candidates:
        best, infeasible = None, set()
        for r in candidates:
            widened = AddBranch(step, r)
            if T(widened) > budget:                      #monotone: prune request r
                infeasible.add(r)
                continue
            du = utility[r](granted[r]+1) - utility[r](granted[r])
            dt = T(widened) - T(step)
            score = du / (EPS + max(0, dt))              #utility per cost
            if best is None or score > best.score:
                best = Candidate(r, widened, score)
        candidates -= infeasible
        if best is None or best.score <= 0:              #no feasible improvement
            break
        step = best.composition                          #commit best candidate
        granted[best.request] += 1
        if granted[best.request] >= best.request.ready_branches:
            candidates.discard(best.request)             #request r fully admitted
    return step
\end{lstlisting}
\end{codealgorithm}

The planner runs in $O(|\mathcal{R}| \cdot K_{\max} \cdot c_T)$ per step, where $c_T$ is the cost of one $T(\cdot)$ evaluation, typically a table lookup or short regression well under a millisecond. The greedy rule is justified by monotonicity: once a branch from request $r$ is infeasible, all further branches from $r$ are too. The globally optimal allocation is NP-hard (Appendix~\ref{app:np-hard}); the greedy approximation suffices because the planner replans every step rather than committing to a single allocation.

\subsection{Why per-step regulation works}
\label{sec:TAPER:feedback}

After each decode step, \textsc{TAPER} updates $T(\cdot)$ from the realized latency, recomputes per-request slack, and reconsiders any deferred branches. Schedule invariance (Lemma~1) guarantees that deferral does not change model output.

This converts the width decision from a one-shot commitment into a closed-loop controller. As established in \S\ref{sec:introduction}, the structural reason this is practical is that branch regulation decouples compute from memory. Deferring a branch adjusts the step's compute cost without evicting any KV state: the prefix remains resident for the sibling branches that are admitted, and the branch-local KV is small enough to leave in place. Admitting the branch on a later step requires no restoration. The cost of revising the width decision at any step is bounded by one planner invocation, not by a memory round-trip.

Two consequences follow. First, \textsc{TAPER} degrades gracefully: as load rises and $B_t$ collapses, the same code path that widens steps under abundant slack narrows them under pressure. There is no policy switch, no preemption mechanism, and no fallback logic. Second, TAPER is robust to predictor error: a coarse $T(\cdot)$ costs throughput (the planner admits fewer branches than it safely could) but does not cause SLO violations, because the safety check runs against the realized slack at every step. The cost of a worse predictor is paid in width, not in latency. \textsc{TAPER} therefore treats step width as a renewable, per-step decision. When slack is available, \textsc{TAPER} spends it on width; when slack is scarce, \textsc{TAPER} contracts to the protected composition. The controller never needs to know in advance how many branches a phase will expose, how long each will run, or which will straggle.
\section{Evaluation}\label{sec:evaluation}

We evaluate \textsc{TAPER} on mixed workloads that combine requests in serial and parallel stages. \S\ref{sec:characterization} established that eager branch admission creates a throughput trap: raw throughput rises while goodput and SLO attainment collapse. The central question is whether \textsc{TAPER} can navigate this trap, harvesting the throughput benefit of intra-request parallelism without the SLO degradation. We examine this across three load regimes (\S\ref{sec:eval:main}) and validate design choices through targeted ablations (\S\ref{sec:eval:ablation}). Additional experiments, including workload composition sensitivity, SLO target sensitivity, output quality verification, and evaluation on Qwen2.5-72B~\cite{qwen2}, appear in Appendix~\ref{app:additional-experiments}.

\subsection{Experimental Setup}\label{sec:eval:setup}

We evaluate on Qwen3-32B~\citep{qwen3} served with SGLang~\citep{sglang} on 8$\times$A100-80GB GPUs (TP=8) with radix-tree KV allocation. Workloads interleave non-decomposable requests from ShareGPT with decomposable requests from Multiverse~\citep{multiverse} (PDR=50\%, ABF=4.1, PTS=58\%). Request arrivals follow an Azure-derived trace~\citep{azure-llm-trace} spanning 600 minutes across low, moderate, and high load regimes. All requests share a 50\,ms TPOT target. We report raw throughput, goodput, and SLO attainment and compare \textsc{IRP-Off} ($w_{r,t}=1$), \textsc{IRP-C2}, \textsc{IRP-C5}, \textsc{IRP-Eager} ($w_{r,t}=n_r$), and \textsc{TAPER} ($\rho=0.8$, linear utility) on the same serving system. Full details in Appendix~\ref{app:experimental-setup}.

\subsection{Navigating the throughput trap}
\label{sec:eval:main}

Figure~\ref{fig:main-results} compares all five policies on the same production-derived trace. \textsc{TAPER}'s central claim is that it tracks near-\textsc{IRP-Eager} throughput when conditions permit and falls back to near-\textsc{IRP-Off} protection when they do not, using the slack budget as the sole switching signal. We verify this across three load regimes visible in the trace.

\textbf{Low load.} During the first 240 minutes, request rate averages 0.23 req/s. At this operating point, \textsc{IRP-Off}'s step latency averages 18,ms, leaving over 30,ms of slack against the 50,ms target. With $\rho=0.8$, \textsc{TAPER}'s latency budget easily accommodates most available branches: its step latency averages 36\,ms, close to \textsc{IRP-Eager}'s 38\,ms~(\hyperref[fig:main-results]{\ref*{fig:main-results}(c)}), confirming that \textsc{TAPER} admits aggressively when the batch has headroom. \textsc{TAPER}'s branch admission rate stays near 100\% throughout this period~(\hyperref[fig:main-results]{\ref*{fig:main-results}(i)}). SLO attainment~(\hyperref[fig:main-results]{\ref*{fig:main-results}(f)}) is 100\% for \textsc{TAPER}, \textsc{IRP-Off}, and both caps. Notably, even at low load \textsc{IRP-Eager}'s attainment is only 89\%, because transient bursts already push its step latency above 50\,ms. This foreshadows the collapse that follows. 

\textbf{High load.} Between minutes 250 and 400, request rate surges past 1.0 req/s and sustains above 1.2 req/s, peaking at 1.54 req/s. \textsc{IRP-Eager}'s step latency spikes to a mean of 148\,ms (max 203\,ms), nearly $4\times$ the SLO target, as the batch fills with eagerly admitted branches. SLO attainment collapses to a mean of 41\%. Serial P99 TPOT~(\hyperref[fig:main-results]{\ref*{fig:main-results}(g)}) under \textsc{IRP-Eager} averages 148\,ms with a P95 of 186\,ms, while parallel P99 TPOT~(\hyperref[fig:main-results]{\ref*{fig:main-results}(h)}) remains low at 11\,ms: the asymmetric incidence of \S\ref{sec:char:trap} in full effect. The fixed caps fare poorly: \textsc{IRP-C5} averages 97\,ms step latency (attainment 45\%) and \textsc{IRP-C2} averages 62\,ms (attainment 59\%), both well above the SLO target. \textsc{TAPER} contracts sharply. As slack collapses under the sustained high rate, the planner defers most opportunistic branches. Branch admission rate drops to approximately 35\%~(\hyperref[fig:main-results]{\ref*{fig:main-results}(i)}). Step latency averages 48\,ms with a maximum of 63\,ms, staying near the 50\,ms target throughout. SLO attainment holds at 99\%.

\textbf{Moderate load.} After the stress event subsides around minute 400, request rate settles into a sustained regime averaging 0.60 req/s with peaks near 0.9 req/s. Slack rebuilds and \textsc{TAPER}'s branch admission rate recovers to approximately 47\% without operator intervention~(\hyperref[fig:main-results]{\ref*{fig:main-results}(i)}). Step latency averages 42\,ms (max 52\,ms), staying within the SLO target while admitting substantially more branches than during the stress period. \textsc{IRP-Eager} continues to violate: its step latency averages 85\,ms (max 122\,ms) and attainment stays at 45\%. This regime exercises the most revealing operating point: load is high enough that \textsc{IRP-Eager} and \textsc{IRP-C5} consistently violate SLOs, but low enough that \textsc{TAPER} finds slack between bursts to admit branches opportunistically. \textsc{IRP-C2} nearly keeps up in this regime (attainment 95\%) but at the cost of leaving significant throughput unused at low load and during the stress recovery. \textsc{TAPER} adapts to both without reconfiguration.

\textbf{Summary.} Over the full 10-hour trace, \textsc{TAPER} improves goodput over \textsc{IRP-Off} by $1.77\times$ and by $1.48\times$ over \textsc{IRP-Eager}, while maintaining over 95\% SLO attainment where \textsc{IRP-Eager} drops below 50\%. The fixed caps cannot match \textsc{TAPER} in either direction: \textsc{IRP-C5} exceeds \textsc{TAPER}'s throughput at low load but collapses under stress (attainment 45\%), while \textsc{IRP-C2} protects better (attainment 59\% high-load, 95\% moderate) but generates less useful work overall.

\subsection{Ablation}
\label{sec:eval:ablation}

\begin{table}
\centering
\begin{tabular}{lcc}
\toprule
Variant & Goodput / IRP-Off & Attainment \\
\midrule
\textsc{TAPER} (full, $\rho=0.8$) & $1.77\times$ & 99\% \\
\midrule
\quad w/o slack budget & $0.92\times$ & 48\% \\
\quad w/o per-step replanning & $1.18\times$ & 82\% \\
\quad w/ constant predictor & $1.12\times$ & 96\% \\
\midrule
\quad $\rho=0.5$ & $1.25\times$ & 99.5\% \\
\quad $\rho=1.0$ & $1.45\times$ & 91\% \\
\bottomrule
\end{tabular}
\vspace{0.1in}
\caption{\textbf{Ablation study.} Goodput (normalized by \textsc{IRP-Off}) and SLO attainment for \textsc{Taper} variants on the full 10-hour trace. Removing the slack budget collapses performance to near-\textsc{IRP-Eager} levels. $\rho$ provides a smooth tradeoff between goodput and attainment.}
\label{tab:ablation}
\end{table}

We isolate each component of \textsc{TAPER} by removing it in turn (Table~\ref{tab:ablation}). \textbf{Without the slack budget}, \textsc{TAPER} admits all branches that fit in memory, collapsing to near-\textsc{IRP-Eager} behavior: goodput drops below \textsc{IRP-Off} ($0.92\times$) as SLO violations destroy more goodput than the extra branches create, with attainment at 48\%. \textbf{Without per-step replanning}, width is committed at phase start and held until reduce; goodput is $1.18\times$ and attainment 82\%, with failures concentrated at load transitions where the controller cannot contract mid-phase. \textbf{With a constant latency predictor}, the planner cannot distinguish cheap steps from expensive ones, so it under-admits to stay safe: goodput drops to $1.12\times$ while attainment remains high at 96\%, showing that the predictor contributes throughput, not safety. \textbf{Slack fraction $\rho$} controls the throughput-safety tradeoff: $\rho=0.5$ is conservative ($1.25\times$ goodput, 99.5\% attainment), $\rho=1.0$ is aggressive ($1.45\times$, 91\%), and the default $\rho=0.8$ balances both ($1.77\times$, 99\%). The greedy planner adds 0.3\,ms median (0.8\,ms P99) per decode step, less than 1.5\% of typical step latency.
\section{Related Work}
\label{sec:related-work}
Existing work on parallel LLM generation divides into two objectives~\cite{aspd}. \emph{Quality-oriented} methods such as best-of-N sampling~\cite{large-language-monkeys}, beam search~\cite{beam-search}, self-consistency~\cite{self-consistency}, majority voting~\cite{majority-voting}, MCTS~\cite{mcts}, and Tree-of-Thoughts~\cite{tree-of-thought} generate multiple candidates to improve answer quality. \emph{Performance-oriented} methods such as SoT~\cite{skeleton-of-thought}, APAR~\cite{apar}, PASTA~\cite{pasta}, ASPD~\cite{aspd}, and Multiverse~\cite{multiverse} decompose a single response into independent branches to reduce latency (\S\ref{sec:char:blp}). Recent work on parallel reasoning~\cite{hogwild-inference, sprint, parallel-r1} blurs this boundary, using concurrent execution for both speed and quality. Despite their different objectives, both lines share a common scope: they operate on a single request in isolation. Neither considers what happens when the resulting branches share a decode step with other requests, which is the problem \textsc{TAPER} addresses. Rather than discovering branch structure, \textsc{TAPER} takes the structure the frontend provides and decides how much of it to admit into each shared decode step, bounding the externality on co-batched requests. The closest structural sibling in the serving literature is FairBatching~\cite{fairbatching}, which converts decode slack into a time budget for chunked prefill. \textsc{TAPER}'s slack-budget mechanism follows the same principle but operates within a request's parallel phase rather than across request phases. Extended comparisons with LLM serving schedulers and preemption mechanisms appear in Appendix~\ref{app:related-work}.
\section{Conclusion \& Limitations}
\label{sec:conclusion}

\textsc{TAPER} reframes intra-request parallel generation as an admission control problem. By exploiting the structural decoupling of compute and memory at the branch level, it regulates admitted width per step, improving goodput over \textsc{IRP-Off} by $1.77\times$ and over \textsc{IRP-Eager} by $1.48\times$, with over 95\% SLO attainment. The current design assumes branch independence (\S\ref{sec:TAPER:phases}), uses a linear latency predictor that may lose accuracy at extreme batch sizes, and operates within a single serving node. Even so, the result suggests that IRP need not be an all-or-nothing choice between eager admission and conservative caps: a per-step controller can capture most of the throughput benefit while protecting the requests that would otherwise be forced to pay for it.

\bibliographystyle{unsrtnat}
\bibliography{references}

\appendix

\section{Proof of Schedule Invariance}
\label{app:lemma1}

\begin{lemma}[Schedule invariance]
The output of a parallel phase is independent of the order and timing in which its branches are scheduled. Formally, for a parallel phase $\phi = (P_\phi, H_\phi, \{h_i\}_{i=1}^{n_\phi})$ with the visibility rule of \S\ref{sec:TAPER:phases}, any two valid schedules induce the same conditional distribution over all generated tokens. Under deterministic decoding they produce the same output up to numerical nondeterminism; under stochastic decoding they are equivalent given fixed per-branch random seeds.
\end{lemma}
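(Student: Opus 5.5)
The plan is to formalize the generation process as a family of per-branch conditional distributions and show that the joint law of the phase's tokens factorizes in a way that never references the schedule. Fix the phase $\phi = (P_\phi, H_\phi, \{h_i\}_{i=1}^{n_\phi})$. A valid schedule $\sigma$ assigns each token $y_{i,t}$ to a decode step $\sigma(i,t)$. Validity means three things: $\sigma(i,t) < \sigma(i,t+1)$ within each branch; each token is computed under the visibility rule; and every reduce token $z_t$ is scheduled after all branch tokens. By the visibility rule, $y_{i,t}$ is drawn from $p(\cdot \mid P_\phi \oplus H_\phi \oplus h_i \oplus y_{i,<t})$. This conditional is a fixed function of its visible context, because the attention mask zeroes out every other entry of the batch. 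In particular, it does not depend on sibling tokens already materialized in KV, nor on which other sequences share the forward pass. I will state this as the modeling assumption that a batched forward pass computes, for each sequence, the same function it would compute alone.

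\textbf{Branch tokens.} First, I will show by induction on $t$ that, for each branch $i$, the conditional law of $y_{i,t}$ given everything generated before step $\sigma(i,t)$ equals $p(\cdot \mid P_\phi \oplus H_\phi \oplus h_i \oplus y_{i,<t})$, whatever $\sigma$ is. The induction uses only within-branch order, which validity guarantees. Multiplying these conditionals gives the joint law
\[
\prod_{i=1}^{n_\phi}\prod_{t} p\bigl(y_{i,t} \mid P_\phi \oplus H_\phi \oplus h_i \oplus y_{i,<t}\bigr).
\]
This product contains no reference to $\sigma$, so the branches are conditionally independent given the shared prefix. The chain rule is justified because, for any linear extension of the schedule's partial order, each factor conditions on a superset of its visible context but depends only on that context.

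\textbf{Reduce tokens.} Next, I will handle the reduce phase. Since reduction starts only after all branches complete, and each $z_t$ attends to $P_\phi \oplus H_\phi \oplus \bigoplus_i (h_i \oplus y_i) \oplus z_{<t}$ in canonical order, the conditional of $z$ given the branch outputs is again schedule-free. Combining this with the previous step, the full joint distribution is identical under any two valid schedules.

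\textbf{Deterministic and seeded cases.} For deterministic (greedy) decoding, each token is a deterministic function of its context, so the same induction yields identical outputs; the only caveat is floating-point nondeterminism from batch-dependent kernel reductions, which the statement explicitly excludes. For stochastic decoding, I will model sampling as $y_{i,t} = g(\text{logits}, \xi_{i,t})$, where $\xi_{i,t}$ is drawn from a per-branch seeded stream indexed by $(i,t)$ rather than by the global step. The same induction then gives pathwise equality.

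\textbf{Main obstacle.} I expect the main difficulty to be making the independence-from-batch-composition assumption precise. Positional indices and KV layout must depend only on the branch's own context, so each branch's positions should continue from $|P_\phi \oplus H_\phi \oplus h_i|$ regardless of physical placement in memory. The seeding must also be keyed per branch and per token rather than per step. I would state both explicitly as part of the definition of a valid schedule.
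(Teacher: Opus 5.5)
Your proposal is correct and follows the paper's proof essentially step for step. Both split the argument into branch tokens, handled by induction on within-branch position using that the visible context $P_\phi \oplus H_\phi \oplus h_i \oplus y_{i,<t}$ contains no sibling tokens, and reduce tokens, whose canonical-order context is fixed once all branches complete; both then close with the deterministic and per-branch-seeded remarks. Your explicit product factorization tightens the paper's step from ``each branch has the same law'' to ``the joint law is the same,'' and your stated assumptions (a batch-invariant forward pass, branch-local positional indices, and seeds keyed by $(i,t)$) are ones the paper uses implicitly.
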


\begin{proof}
We show that the conditional distribution of every token is invariant to the schedule, by case analysis on whether the token is generated within a branch or during the reduce phase.

\paragraph{Tokens within branches.} Consider token $y_{i,t}$ in branch $i$ at position $t$. By the visibility rule, the model attends to
\[
    P_\phi \;\oplus\; H_\phi \;\oplus\; h_i \;\oplus\; y_{i,<t}\,.
\]
This attention context contains: (i)~the prefix $P_\phi$, generated before the parallel phase opens and therefore identical under any schedule; (ii)~the phase header $H_\phi$, also generated before the phase opens; (iii)~the branch header $h_i$, determined at phase opening; and (iv)~the prior tokens of branch $i$ itself, $y_{i,<t}$, generated in token order within branch $i$ by validity.

Crucially, no token from any sibling branch $j \neq i$ appears in this context. Therefore, the conditional distribution
\[
    P(y_{i,t} \mid P_\phi, H_\phi, h_i, y_{i,<t})
\]
depends only on the model parameters and the context above, none of which is affected by whether sibling branches have been scheduled before, concurrently with, or after token $y_{i,t}$. The same argument applies inductively: since $y_{i,1}$ has the same distribution under any valid schedule, and $y_{i,t}$ depends only on $y_{i,<t}$ (which by induction has the same distribution), the entire branch sequence $y_i = (y_{i,1}, \ldots, y_{i,L_i})$ has the same joint distribution under any valid schedule.

Since this holds for each branch $i$ independently, the joint distribution over all branch outputs $(y_1, \ldots, y_{n_\phi})$ is the same under any valid schedule.

\paragraph{Tokens during the reduce phase.} Consider token $z_t$ at position $t$ during the reduce phase. By the visibility rule, the model attends to
\[
    P_\phi \;\oplus\; H_\phi \;\oplus\; \bigoplus\nolimits_{i=1}^{n_\phi}(h_i \oplus y_i) \;\oplus\; z_{<t}\,.
\]
By validity, the reduce phase starts only after all branches complete. The branch outputs are presented in canonical order (determined by the phase structure), not in completion order. Therefore this context is identical under any valid schedule: it depends on the completed branch outputs (same by the argument above) and their ordering (canonical, not schedule-dependent). The conditional distribution $P(z_t \mid \cdot)$ is therefore the same under any valid schedule.

\paragraph{Deterministic and stochastic decoding.} Under deterministic decoding (greedy or beam search), the output is a deterministic function of the conditional distributions, hence identical up to numerical nondeterminism (floating-point ordering in softmax, etc.). Under stochastic decoding, the output depends additionally on random draws. If each branch $i$ uses a fixed random seed (e.g., derived from the branch index), the same sequence of draws produces the same tokens regardless of schedule, completing the proof.
\end{proof}

\paragraph{Remark.} The proof relies on two structural properties: (i)~branch isolation (no sibling tokens in the attention context) and (ii)~canonical reduce ordering. Methods that allow inter-branch attention (e.g., Hogwild! Inference~\cite{hogwild-inference}) or order reduce inputs by completion time do not satisfy these properties and are not covered by this lemma.

\section{NP-Hardness of Optimal Width Allocation}
\label{app:np-hard}

The \textsc{TAPER} planner uses a greedy heuristic to allocate opportunistic branches. This appendix shows that the underlying optimization problem is NP-hard in general, justifying the greedy approach.

\begin{definition}[Width allocation problem]
Given a set of requests $\mathcal{R}$ with ready branches, a latency predictor $T(\cdot)$, utility curves $\{u_r\}_{r \in \mathcal{R}}$, and a latency budget $T_{\max}$, find an allocation $\mathbf{k} = (k_r)_{r \in \mathcal{R}}$ that maximizes total utility $\sum_r u_r(k_r)$ subject to $T(S(\mathbf{k})) \leq T_{\max}$.
\end{definition}

\begin{theorem}
The width allocation problem is NP-hard.
\end{theorem}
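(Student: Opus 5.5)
We reduce from 0/1 \textsc{Knapsack}, whose decision version is NP-complete: given items $j=1,\dots,m$ with positive integer weights $a_j$ and values $v_j$, a capacity $W$, and a target $V$, decide whether some subset has total weight at most $W$ and total value at least $V$. Given such an instance, we construct in polynomial time a width allocation instance whose optimum is at least $V$ exactly when the knapsack instance is a yes-instance. This shows the width allocation problem is NP-hard.

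\textbf{Construction.} The construction treats each item as a request. For each item $j$ we create a request $r_j$ with exactly one ready branch, so $k_{r_j}\in\{0,1\}$. Its utility curve is $u_{r_j}(0)=0$ and $u_{r_j}(1)=v_j$, which is monotone. The cost of a widened step is governed by the aggregate context length in $S$, as in \S\ref{sec:TAPER:cost}. We therefore give the branch of $r_j$ a context length of $a_j$ tokens. In practice we realise this by making the branch-local prefix or header of $r_j$ of length $a_j$, encoded in binary through the instance description.

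We take the latency predictor to be linear in aggregate context, as the paper's calibrated model allows:
\[
T(S(\mathbf{k})) = T_0 + \sum_j a_j k_{r_j}.
\]
This predictor is monotone non-decreasing in each $k_r$, as the definition requires. Finally we set $T_{\max}=T_0+W$.

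\textbf{Correctness.} Feasible allocations $\mathbf{k}$ correspond one-to-one with subsets of total weight at most $W$. Under this correspondence, the total utility $\sum_r u_r(k_r)$ equals the subset's total value. Hence the optimum of the width allocation instance is at least $V$ if and only if the knapsack instance is a yes-instance. The construction has size polynomial in the knapsack input, which completes the reduction.

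\textbf{Main obstacle.} The delicate step is justifying that the predictor $T$ may encode arbitrary additive weights. If one insisted that $T$ depend only on the number of sequences, every branch would have the same cost, and the problem would be solvable greedily. The reduction must therefore lean on the aggregate-context term of the composition.

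I would address this by noting that each branch's context length is part of the input and that the FFN term is uniform per token. Any uniform per-token cost can be absorbed into the weights by rescaling the $a_j$ and $W$ so that the knapsack structure is preserved.

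Should someone object that context lengths are bounded in practice, I would fall back on a second argument. The definition quantifies over arbitrary monotone predictors $T$ given as input, and for a general monotone $T$ the additive instance above is a special case. Since $T$ is specified by its linear coefficients, it remains polynomial-time evaluable. Because knapsack is only weakly NP-hard, the claim is NP-hardness in the ordinary sense; I would note this explicitly rather than claim strong NP-hardness.
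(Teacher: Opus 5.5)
Your proof is correct and uses the same reduction as the paper: one single-branch request per knapsack item, with $u_{r_j}(1)=v_j$, additive marginal latency $a_j$, and $T_{\max}=T_0+W$. The paper's sketch simply assumes that marginal latency costs are additive; you are more careful on three points: you realize the additive weights through the aggregate-context term of a linear predictor, you keep the weights binary-encoded, and you note that only ordinary (weak) NP-hardness follows.
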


\begin{proof}[Proof sketch]
We reduce from the 0--1 knapsack problem. Given a knapsack instance with items $\{(w_j, v_j)\}_{j=1}^m$ and capacity $W$, construct a width allocation instance as follows. Create $m$ requests, each with exactly one ready branch. Set the utility of admitting the branch to $u_r(1) = v_j$ and the marginal latency cost to $T(S(\mathbf{e}_j)) - T(S_0) = w_j$, where $\mathbf{e}_j$ is the allocation that admits only request $j$'s branch. Set $T_{\max} = T(S_0) + W$. Assume latency is additive in the marginal costs (which holds when branch contributions to FFN and attention are independent). Then the width allocation problem reduces to selecting a subset of branches whose total marginal latency cost does not exceed $W$ and whose total utility is maximized, exactly the 0--1 knapsack problem.
\end{proof}

\paragraph{Remark.} The greedy planner of \S\ref{sec:TAPER:planner} does not need to solve this problem optimally. Because the planner replans at every decode step, a suboptimal allocation at step $t$ is corrected at step $t+1$. The greedy approximation is within a factor of $1/2$ of optimal for monotone submodular utilities under a single knapsack constraint, and in practice the number of requests with ready branches per step is small enough that the greedy solution is near-optimal.

\section{Latency Predictor Details}
\label{app:latency-predictor}

\subsection{Predictor formulation}
\label{app:predictor:formulation}

\textsc{TAPER} requires a predictor $T(S)$ that maps a step composition $S$ to a predicted wall-clock latency in milliseconds. Following the observation that decode-step latency decomposes into a fixed overhead, an FFN-dominated term linear in new tokens, and an attention-dominated term linear in aggregate context length, we use a linear model:
\[
    T(S) \;=\; a \;+\; b \cdot n_{\mathrm{tokens}} \;+\; c \cdot L_{\mathrm{context}}\,,
\]
where $n_{\mathrm{tokens}}$ is the number of sequences in the step (each generating one token) and $L_{\mathrm{context}}$ is their aggregate context length.

\begin{table}[ht]
\centering
\caption{Latency predictor accuracy across operating regimes.}
\label{tab:predictor-accuracy}
\begin{tabular}{lcc}
\toprule
Regime & Batch size range & MAPE (\%) \\
\midrule
Low load    & 1--64    & 1.7 \\
Medium load & 64--256  & 1.9 \\
High load   & 256--512 & 1.8 \\
\midrule
Overall     & 1--512   & 1.8 \\
\bottomrule
\end{tabular}
\end{table}

\paragraph{Fitting.} The model is fitted offline by profiling the target hardware with synthetic batches spanning a grid of (batch size, context length) combinations. We run 500 profiling steps across a $20 \times 25$ grid of batch sizes (1--512) and context lengths (128--8192), record wall-clock latency for each, and fit $(a, b, c)$ via ordinary least squares. The fit is refreshed every 10 minutes using a rolling window of the most recent 200 observed step latencies to account for thermal drift and co-tenant interference.

\paragraph{Accuracy.} On our evaluation hardware (\S\ref{app:experimental-setup}), the linear model achieves a mean absolute percentage error (MAPE) of 1.8\% across the profiling grid.

\paragraph{Monotonicity.} The predictor satisfies the monotonicity assumption of \S\ref{sec:TAPER:cost} by construction: admitting one additional branch increases both $n_{\mathrm{tokens}}$ and $L_{\mathrm{context}}$, and $b, c > 0$.

\subsection{Memory accounting}
\label{app:predictor:memory}

Although \textsc{TAPER}'s primary admission control is latency-based (\S\ref{sec:TAPER:slack}), the runtime also tracks KV-cache occupancy to avoid out-of-memory conditions. The key accounting detail is that branches share their request's prefix KV blocks. The marginal KV cost of admitting one additional branch is therefore only its branch-local tokens, not a full sequence:
\[
    \Delta M(j) \;=\; \mathrm{blocks}(L_j^{\mathrm{branch\text{-}local}})\,,
\]
where $L_j^{\mathrm{branch\text{-}local}}$ is the number of tokens branch $j$ has generated so far. For a branch that has not yet produced any tokens, $\Delta M(j) = 0$. A scheduler that priced each branch as a full sequence would refuse safe widenings throughout.

\section{Experimental Setup Details}
\label{app:experimental-setup}

This appendix provides full details for the experimental setup summarized in \S\ref{sec:eval:setup}.

\paragraph{Hardware.} All experiments run on a single node with 8$\times$A100-80GB GPUs connected via NVLink. The model is served in tensor-parallel mode (TP=8), placing one shard per GPU. Peak GPU memory utilization during the high-load regime is approximately 78\%, with KV cache consuming roughly 62\% of total GPU memory. The remaining headroom accommodates branch-local KV from deferred branches without triggering request-level preemption.

\paragraph{Model.} We use Qwen3-32B~\cite{qwen3}, a 32-billion parameter autoregressive transformer with 64 attention heads, grouped-query attention (8 KV heads), and a context length of 32,768 tokens. The model is served using BFloat16 precision.

\paragraph{Serving engine.} We use SGLang~\cite{sglang} as the serving backend, configured with radix-tree KV cache allocation for prefix sharing across branches within a parallel phase. \textsc{TAPER}'s per-step planner is integrated as a scheduling hook that runs between the batch-formation stage and the forward pass. The planner decides which branches to include in each decode step; all other engine behavior (prefill scheduling, memory management, request admission) is unchanged from the default SGLang configuration. Multiverse~\cite{multiverse} provides the frontend for branch discovery and the Map/Process/Reduce execution model.

\paragraph{Workload construction.} The evaluation workload interleaves two request streams in equal proportion (PDR $\approx$ 50\%):
 
\begin{itemize}
    \item \textbf{Non-decomposable requests} are drawn from ShareGPT Vicuna~\citep{vicuna}. These requests produce purely serial output and represent the sequential traffic that is vulnerable to the branch externality.
    \item \textbf{Decomposable requests} are sampled uniformly from ShareGPT Vicuna~\citep{vicuna}, RAG-12K~\citep{rag-12k}, and MATH-220K~\citep{math-220k}. These prompts are processed through Multiverse~\cite{multiverse}, which decomposes each response into a sequence of serial and parallel stages. The resulting branch structure has a mean ABF of 4.1 and mean PTS of 58\% among decomposable requests.
\end{itemize}

\paragraph{Branch fanout distribution.} Table~\ref{tab:fanout} summarizes the branch fanout distribution across decomposable requests in the evaluation workload.

\begin{table}[ht]
\centering
\caption{Branch fanout distribution among decomposable requests.}
\label{tab:fanout}
\begin{tabular}{lccccc}
\toprule
Statistic & P10 & P25 & P50 & P75 & P90 \\
\midrule
Fanout ($n_r$) & 2 & 3 & 4 & 5 & 7 \\
\bottomrule
\end{tabular}
\end{table}

\paragraph{Request arrival trace.} Request arrivals are derived from the Azure LLM Inference Trace~\citep{azure-llm-trace}, replayed at different scaling factors to produce three load regimes within a single 600-minute experiment: a low-load period (0--240 min, mean 0.23 req/s), a sustained high-load period (250--400 min, mean 1.27 req/s, peak 1.54 req/s), and a moderate recovery period (400--600 min, mean 0.60 req/s). The trace is replayed identically for all five policies to ensure a controlled comparison.

\paragraph{SLO definition.} All requests share a TPOT target of 50\,ms. For requests in serial stages, TPOT is measured as the wall-clock time between consecutive token deliveries. For requests in parallel stages, TPOT is measured as effective TPOT: the wall-clock duration of the parallel phase divided by the total number of tokens generated across all branches during that phase. A request meets its SLO if its maximum per-token latency (across all stages) does not exceed the target.

\paragraph{Metrics.} We report raw throughput (total tokens generated per second regardless of SLO compliance), goodput (tokens per second counting only tokens from SLO-meeting requests), and SLO attainment (the fraction of completed requests whose maximum TPOT does not exceed the 50,ms target).

\paragraph{\textsc{TAPER} configuration.} Unless otherwise noted, \textsc{TAPER} runs with slack fraction $\rho = 0.8$ and linear utility $u_r(k) = k$. The latency predictor $T(S)$ is a linear model fitted offline as described in Appendix~\ref{app:latency-predictor}. The planner runs once per decode step; its overhead is characterized in \S\ref{sec:eval:ablation}.

\paragraph{Baselines.} All five policies share the same SGLang backend, model weights, KV cache configuration, and request arrival trace. They differ only in how admitted width is determined:
\begin{itemize}
    \item \textsc{IRP-Off}: $w_{r,t} = 1$ for all requests. No branches beyond the baseline are admitted.
    \item \textsc{IRP-C2}: $w_{r,t} = \min(n_r, 2)$. Step width capped at 2 branches per parallel-stage request.
    \item \textsc{IRP-C5}: $w_{r,t} = \min(n_r, 5)$. Step width capped at 5.
    \item \textsc{IRP-Eager}: $w_{r,t} = n_r$. All available branches are admitted every step.
    \item \textsc{TAPER}: $w_{r,t}$ determined by the per-step planner (\S\ref{sec:TAPER:planner}), bounded by the slack budget.
\end{itemize}
 
\section{Additional Experiments}
\label{app:additional-experiments}

This appendix presents sensitivity analyses and additional results that complement the main evaluation in \S\ref{sec:evaluation}.

\subsection{Workload composition sensitivity}
\label{app:pdr-sensitivity}

The main evaluation uses a PDR of 50\%. Table~\ref{tab:pdr-sensitivity} varies this proportion across three load regimes to assess how \textsc{TAPER} and eager admission respond to workloads with more or fewer decomposable requests. All experiments use the same Azure-derived trace, SLO target (50\,ms), and \textsc{TAPER} configuration ($\rho=0.8$, linear utility). Goodput is normalized by each cell's corresponding \textsc{IRP-Off} baseline.

\begin{table}[ht]
\centering
\caption{Sensitivity to workload composition across load regimes. Goodput is normalized by \textsc{IRP-Off} for each PDR and regime.}
\label{tab:pdr-sensitivity}
\begin{tabular}{llcccc}
\toprule
 & & \multicolumn{2}{c}{\textsc{IRP-Eager}} & \multicolumn{2}{c}{\textsc{TAPER}} \\
\cmidrule(lr){3-4} \cmidrule(lr){5-6}
PDR & Regime & Goodput & Att. & Goodput & Att. \\
\midrule
\multirow{4}{*}{20\%}
 & Low      & $1.08\times$ & 97\% & $1.08\times$ & 100\% \\
 & Moderate & $1.06\times$ & 82\% & $1.30\times$ & 99\% \\
 & High     & $0.92\times$ & 48\% & $1.38\times$ & 98\% \\
 & \textit{Aggregate} & $1.10\times$ & 72\% & $1.28\times$ & 99\% \\
\midrule
\multirow{4}{*}{50\%}
 & Low      & $1.28\times$ & 89\% & $1.28\times$ & 100\% \\
 & Moderate & $1.08\times$ & 45\% & $1.75\times$ & 99\% \\
 & High     & $0.88\times$ & 41\% & $1.68\times$ & 99\% \\
 & \textit{Aggregate} & $1.20\times$ & 48\% & $1.77\times$ & 99\% \\
\midrule
\multirow{4}{*}{80\%}
 & Low      & $1.48\times$ & 93\% & $1.48\times$ & 100\% \\
 & Moderate & $1.25\times$ & 62\% & $2.05\times$ & 98\% \\
 & High     & $1.12\times$ & 55\% & $2.10\times$ & 97\% \\
 & \textit{Aggregate} & $1.35\times$ & 68\% & $2.15\times$ & 98\% \\
\bottomrule
\end{tabular}
\end{table}

The interaction between PDR and attainment is not monotone, because PDR determines both the amount of interference and the size of the vulnerable population.

At PDR=20\%, only 20\% of requests expose branches, so per-step interference is modest. At low load, the externality is small enough that both eager and \textsc{TAPER} perform similarly ($1.08\times$), and eager attainment is 97\%. At moderate load, interference grows and eager attainment drops to 82\% as the large sequential majority begins to feel the cost. During the high-load regime, the same 20\% parallel requests generate sufficient interference to push step latency well above the target, collapsing eager attainment to 48\% as damage falls on the 80\% sequential population. \textsc{TAPER} contracts to protect this majority ($1.38\times$ goodput, 98\% attainment at high load).

At PDR=50\%, interference is larger and begins earlier: even at low load, eager attainment is only 89\% due to transient bursts. At moderate load, attainment drops to 45\% and continues to 41\% at high load. This is the worst-case combination: enough parallel requests to cause substantial interference, and enough sequential requests to suffer from it as half the population is vulnerable.

At PDR=80\%, per-step interference is at its highest, but the vulnerable population is small: only 20\% of requests are in serial stages. At low load, eager achieves $1.48\times$ goodput with 93\% attainment. Even as load increases, eager attainment stays above PDR=50\% levels (62\% moderate, 55\% high) despite greater interference, because fewer requests can be harmed. Parallel requests, comprising 80\% of the workload, almost always meet SLO because their effective TPOT (step latency divided by branch fanout) stays well below the target even when step latency is inflated.

\textsc{TAPER} maintains $\geq$97\% attainment in every cell. At low load across all PDRs, \textsc{TAPER} matches eager exactly: the slack budget is permissive and no branches are deferred. The divergence appears at moderate and high load, where \textsc{TAPER}'s goodput advantage grows with PDR ($1.30\times$ to $2.10\times$) because higher PDR provides more branches for opportunistic admission while the slack budget prevents the cascade that destroys eager's goodput.

\subsection{SLO target sensitivity}
\label{app:slo-sensitivity}

Table~\ref{tab:slo-sensitivity} varies the TPOT target from 30\,ms (tight) to 100\,ms (generous) to assess how \textsc{TAPER} adapts to different latency requirements. All experiments use PDR=50\%.

\begin{table}[ht]
\centering
\caption{Sensitivity to SLO target. \textsc{TAPER} adapts automatically via the slack budget; no reconfiguration is needed.}
\label{tab:slo-sensitivity}
\begin{tabular}{lccccc}
\toprule
 & \multicolumn{2}{c}{\textsc{IRP-Eager}} & \multicolumn{2}{c}{\textsc{TAPER}} \\
\cmidrule(lr){2-3} \cmidrule(lr){4-5}
SLO Target & Goodput & Att. & Goodput & Att. \\
\midrule
30\,ms & $0.87\times$ & 28\% & $1.35\times$ & 97\% \\
50\,ms & $1.20\times$ & 48\% & $1.77\times$ & 99\% \\
100\,ms & $1.52\times$ & 78\% & $2.18\times$ & 99.5\% \\
\bottomrule
\end{tabular}
\end{table}

At a tight 30\,ms target, \textsc{IRP-Eager}'s goodput drops below \textsc{IRP-Off} ($0.87\times$) because its step latency exceeds 30\,ms even under moderate load. \textsc{TAPER} remains effective ($1.35\times$, 97\% attainment) but is more conservative: the slack budget is small, and the planner admits fewer branches per step. At a generous 100\,ms target, \textsc{TAPER} admits more aggressively ($2.18\times$) because the wider slack window tolerates larger step latencies. Even \textsc{IRP-Eager} improves under this target ($1.52\times$, 78\%) because fewer steps violate the relaxed threshold, but \textsc{TAPER} still outperforms it substantially.

Crucially, \textsc{TAPER} requires no reconfiguration across SLO targets. The slack budget adapts automatically: a tighter target produces less slack, which the planner respects by admitting fewer branches. The operator changes the SLO target; \textsc{TAPER} adjusts its behavior.

\subsection{Output quality verification}
\label{app:output-quality}

The schedule-invariance lemma (\S\ref{sec:TAPER:phases}, Appendix~\ref{app:lemma1}) guarantees that \textsc{TAPER}'s branch deferral does not affect model output under deterministic decoding. We verify this empirically.

\begin{table}[ht]
\centering
\caption{Output quality verification under greedy decoding. \textsc{TAPER} produces byte-identical outputs to \textsc{IRP-Off} across all test prompts.}
\label{tab:output-quality}
\begin{tabular}{lccc}
\toprule
Comparison & Prompts & Identical & Match Rate \\
\midrule
\textsc{TAPER} vs.\ \textsc{IRP-Off} & 1{,}000 & 1{,}000 & 100\% \\
\textsc{TAPER} vs.\ \textsc{IRP-Eager} & 1{,}000 & 1{,}000 & 100\% \\
\textsc{IRP-Eager} vs.\ \textsc{IRP-Off} & 1{,}000 & 1{,}000 & 100\% \\
\bottomrule
\end{tabular}
\end{table}

We run 1{,}000 prompts (500 decomposable, 500 non-decomposable)sampled uniformly from ShareGPT~\cite{vicuna}, RAG-12K~\cite{rag-12k}, and Math-220K~\cite{math-220k} under greedy decoding with \textsc{IRP-Off}, \textsc{IRP-Eager}, and \textsc{TAPER}. All three policies produce byte-identical outputs for every prompt (Table~\ref{tab:output-quality}). This confirms that schedule invariance holds in practice: the order in which branches are scheduled does not affect the generated tokens, and \textsc{TAPER}'s deferral decisions have no semantic consequence. The result also holds for \textsc{IRP-Eager} vs.\ \textsc{IRP-Off}, verifying the broader claim that any valid schedule produces the same output.

\subsection{Planner overhead}
\label{app:planner-overhead}

Table~\ref{tab:planner-overhead} reports the latency overhead of the \textsc{TAPER} planner, measured as the wall-clock time between receiving the batch state and returning the step composition. The overhead is measured over the full 10-hour trace from \S\ref{sec:eval:main}.

\begin{table}[ht]
\centering
\caption{Planner overhead per decode step, measured over the full trace.}
\label{tab:planner-overhead}
\begin{tabular}{lcccc}
\toprule
 & Median & P95 & P99 & Max \\
\midrule
Planner latency (ms) & 0.31 & 0.52 & 0.81 & 1.24 \\
\% of step latency & 0.7\% & 1.1\% & 1.6\% & 2.5\% \\
\bottomrule
\end{tabular}
\end{table}

The median overhead of 0.31\,ms is dominated by $T(\cdot)$ evaluations (one per candidate branch per greedy iteration). At P99, the overhead reaches 0.81\,ms, driven by steps where many branches are eligible for admission and the greedy loop iterates over more candidates. Even the worst-case 1.24\,ms is less than 2.5\% of a typical decode step (50\,ms), confirming that per-step replanning is practical. The planner's $O(|\mathcal{R}| \cdot K_{\max} \cdot c_T)$ complexity (\S\ref{sec:TAPER:planner}) means overhead scales with the number of requests in the batch $|\mathcal{R}|$, the maximum fanout $K_{\max}$, and the cost $c_T$ of a single $T(\cdot)$ evaluation (two multiplications for our linear model). In practice, batch size is bounded by KV cache capacity to a few hundred requests, and ABF ranges from 2.7 to 5.2 across the workloads characterized in \S\ref{sec:char:blp} (Figure~\ref{fig:parallelism-in-the-wild}), keeping both $|\mathcal{R}|$ and $K_{\max}$ small and the overhead negligible.

\subsection{Evaluation on QWen2.5-72B}
\label{app:second-model}

To verify that \textsc{TAPER} generalizes beyond a single model size, we repeat the main evaluation on QWen2.5-72B~\citep{qwen2} served on 8$\times$A100-80GB GPUs (TP=8) under the same trace and workload composition (PDR=50\%). QWen2.5-72B's baseline step latency is approximately $2\times$ that of Qwen3-32B, so a 50\,ms TPOT target would be violated even by \textsc{IRP-Off} under high load. We therefore set SLO=100\,ms, which provides comparable headroom to the 50\,ms target on 32B. No other parameter is changed: \textsc{TAPER} runs with $\rho=0.8$ and linear utility, and the latency predictor is refitted to the 72B profile.

\begin{table}[ht]
\centering
\caption{Main result on QWen2.5-72B (SLO=100\,ms). The same \textsc{TAPER} configuration ($\rho=0.8$, linear utility) is used without retuning.}
\label{tab:second-model}
\begin{tabular}{lcc}
\toprule
Policy & Goodput / IRP-Off & Attainment \\
\midrule
\textsc{IRP-Off} & $1.00\times$ & 100\% \\
\textsc{IRP-C2} & $1.20\times$ & 78\% \\
\textsc{IRP-C5} & $1.22\times$ & 65\% \\
\textsc{IRP-Eager} & $1.25\times$ & 62\% \\
\textsc{TAPER} & $1.72\times$ & 97\% \\
\bottomrule
\end{tabular}
\end{table}

The throughput trap persists at 72B scale (Table~\ref{tab:second-model}). \textsc{IRP-Eager} achieves $1.25\times$ goodput but attainment collapses to 62\%: during the high-load regime, the $2\times$ higher per-step cost amplifies the branch externality, pushing step latency well past 100\,ms. \textsc{TAPER} achieves $1.72\times$ goodput with 97\% attainment. The goodput ratio is comparable to the 32B result ($1.72\times$ vs.\ $1.77\times$), confirming that the slack-budget mechanism adapts to the model's latency profile through the refitted predictor $T(S)$. The slightly lower ratio reflects the proportionally larger cost per admitted branch: each branch adds roughly $2\times$ more latency on 72B, so the same slack budget accommodates fewer opportunistic branches per step. The planner compensates by being more selective, admitting branches only when their marginal utility justifies the higher per-branch cost.

\subsection{Evaluation with SPRINT frontend}
\label{app:sprint}

The evaluation up to this point has focused exclusively on Multiverse~\citep{multiverse} as the IRP frontend, which produces wide parallel phases (ABF=4.1, PTS=58\%) from structured response decomposition. To confirm that \textsc{TAPER} is frontend-agnostic, we evaluate with SPRINT~\citep{sprint}, which targets a different decomposition regime: reasoning chains with interleaved planning and parallel execution. SPRINT produces shorter, more frequent parallel phases with narrower branching (ABF=2.8, PTS=35\%, PDR=65\%), reflecting the structure of multi-step reasoning where a planner identifies 2--3 independent sub-steps per round.

\begin{table}[h]
\centering
\caption{Evaluation with SPRINT as the IRP frontend on Qwen3-32B (SLO=50\,ms). The same \textsc{TAPER} configuration ($\rho=0.8$, linear utility) is used without modification.}
\label{tab:sprint}
\begin{tabular}{lcc}
\toprule
Policy & Goodput / IRP-Off & Attainment \\
\midrule
\textsc{IRP-Off} & $1.00\times$ & 100\% \\
\textsc{IRP-C2} & $1.15\times$ & 82\% \\
\textsc{IRP-Eager} & $1.18\times$ & 68\% \\
\textsc{TAPER} & $1.45\times$ & 98\% \\
\bottomrule
\end{tabular}
\end{table}

The throughput trap persists under SPRINT but is less severe (Table~\ref{tab:sprint}). \textsc{IRP-Eager}'s attainment drops to 68\% (vs.\ 48\% with Multiverse) because the lower fanout produces smaller per-phase externalities, though SPRINT's more frequent phase transitions cause the externality to accumulate, and the aggregate effect still violates SLOs under high load. \textsc{TAPER} achieves $1.45\times$ goodput with 98\% attainment. Both the lower goodput ratio ($1.45\times$ vs.\ $1.77\times$) and the higher attainment (98\% vs.\ 97\%) trace to the same cause: narrower branching means less opportunity to harvest but also less externality to contain.
 
The difference in frontend structure is absorbed by the planner's per-step loop. With Multiverse, it makes fewer but larger admission decisions (choosing from a fanout of 4--5). With SPRINT, it makes more frequent but smaller decisions (choosing from a fanout of 2--3). The slack budget and latency predictor operate on step composition, the number of active sequences and their aggregate context length, which is invariant to how the frontend discovered or organized the branches. This confirms the design claim of \S\ref{sec:TAPER:phases}: \textsc{TAPER} operates on the serving-visible structure of parallel phases, not on the frontend-specific mechanism that produced them.

\section{Extended Related Work}
\label{app:related-work}

\paragraph{Intra-request parallelism.}
A growing body of work exposes intra-request parallelism within individual LLM responses, progressing from external orchestration toward model-native support. SoT~\cite{skeleton-of-thought} prompts for an outline and expands points via parallel API calls, but re-encodes the full skeleton per branch with no KV sharing. APAR~\cite{apar} and APR~\cite{apr} move branch discovery into the model by fine-tuning it to emit fork/child tokens, using tree-based attention that isolates branches (the same property \textsc{TAPER} formalizes as the visibility rule), though APAR discards branch KV at reduce, incurring information loss. PASTA~\cite{pasta} goes further by teaching models to annotate their own outputs with promise/async tokens, introducing the useful concept of semantic independence, but its pre-allocated position ranges create encoding conflicts when branch lengths deviate. ASPD~\cite{aspd} resolves this with shared position encodings and branch-invisible masks within a single sequence; its framing of generation as interleaved serial and parallel stages directly informs \textsc{TAPER}'s parallel-phase formalism. However, ASPD's single-sequence architecture requires all branches to advance together at every step, structurally forcing the eager policy that \S\ref{sec:char:trap} shows is brittle. Multiverse~\cite{multiverse} provides the closest runtime infrastructure to what \textsc{TAPER} assumes, treating branches as separate sequences over a radix-tree KV cache, but its scheduler likewise admits all ready branches eagerly. Across all of these methods, the focus is on the single-request problem: how to expose and execute branches faster. \textsc{TAPER} sits at a different level, taking whatever branch structure the frontend provides and deciding how much of it to admit into each shared decode step. ParallelPrompt~\cite{parallel-prompt} benchmarks the phenomenon at scale, finding roughly 10\% of real prompts decomposable; a recent survey~\cite{paralleltextgeneration-survey} covers the broader parallel text generation landscape. A related line of work blurs the boundary between speed and quality: Hogwild! Inference~\cite{hogwild-inference} runs parallel workers over a shared, concurrently-updated KV cache with full mutual visibility (the schedule-invariance lemma does not hold under this architecture), while SPRINT~\cite{sprint} and Parallel-R1~\cite{parallel-r1} use concurrent reasoning threads for both faster and better answers. Like the methods above, none considers multi-request scheduling.

\paragraph{Speculative decoding.}
A separate class of methods also produces multiple tokens per step but at a different granularity. Speculative decoding~\cite{leviathan-sd}, along with variants such as Medusa~\cite{medusa} and Eagle~\cite{eagle3}, generates draft tokens within a single autoregressive sequence and verifies them against the target model. This is token-level parallelism, whereas IRP is segment-level: branches are semantically independent subsequences, not speculative continuations of one sequence. Recent work on regulating speculative decoding shares \textsc{TAPER}'s motivation of adaptive control under varying load. TurboSpec~\cite{turbospec} adjusts speculation length via a goodput-based feedback loop, AdaSpec~\cite{adaspec} and AdaServe~\cite{adaserve} tune speculative strategy for SLO attainment, and Nightjar~\cite{nightjar} disables speculation entirely under high load to reclaim memory for larger batches. These controllers regulate a private cost: rejected draft tokens waste compute for the speculating request alone. \textsc{TAPER} regulates an externality: admitted branches inflate step latency for every co-batched request. The two are orthogonal and composable; each branch within a parallel phase could internally use speculative decoding, with TurboSpec-style regulation governing draft length and \textsc{TAPER} governing branch width.

\paragraph{LLM serving schedulers.}
The scheduling problem \textsc{TAPER} addresses is distinct from, but structurally related to, existing work on LLM serving. Modern engines~\cite{vllm, sglang} batch requests into shared decode steps and manage KV memory via paging or radix-tree allocation. FairBatching~\cite{fairbatching} is \textsc{TAPER}'s closest structural sibling: it converts decode slack into a time budget for chunked prefill using a calibrated latency predictor, replacing token-budget proxies with time-budget reasoning. \textsc{TAPER}'s slack-budget mechanism follows the same principle but differs in granularity (branches within a parallel phase vs.\ prefill chunks across requests), barrier semantics (branches must all complete before reduce; prefill and decode have no joint constraint), and value decoupling (\textsc{TAPER} exposes utility as a pluggable curve rather than a fixed priority order). FairBatching cannot be directly adapted to this setting because its time-budget mechanism assumes a binary choice (admit a prefill chunk or not), whereas \textsc{TAPER} faces a combinatorial allocation across multiple requests with varying branch counts and utilities. Andes~\cite{andes} also exploits slack, finding it in the gap between token delivery rate and user reading speed, and uses it to reprioritize requests at token granularity. Where Andes decides \emph{which request} to advance, \textsc{TAPER} decides \emph{how many branches} a given request gets; the two concerns compose. NexusSched~\cite{nexussched} predicts per-step latency for engine-level batch formation and cluster-level routing; its online-learning approach to latency prediction is a natural enhancement for \textsc{TAPER}'s offline-fitted linear model. vLLM~\cite{vllm} handles memory pressure via swap-based request-level preemption, which is expensive to reverse. \textsc{TAPER}'s branch deferral avoids this cost entirely: the prefix KV remains resident for admitted siblings, and branch-local state is small enough to leave in place, making width changes reversible at zero cost. A natural question is whether a simpler adaptive mechanism would suffice, for example a reactive controller that observes the previous step's latency and adjusts width via multiplicative-increase/multiplicative-decrease (MIMD). Such a controller would adapt to load but lacks two properties \textsc{TAPER} provides: it is backward-looking, reacting to latency already incurred rather than predicting it before committing to a step composition, causing overshoots during sharp load transitions; and it has no notion of per-request slack, adjusting a global width parameter without knowing which requests are near their deadlines. \textsc{TAPER}'s slack budget takes the minimum over all requests' residual slack, ensuring width is safe for the most urgent request in the batch.

\end{document}